\newtheorem{lemma}{Lemma}
\newtheorem{definition}{Definition}
\newtheorem*{proof}{Proof}
\DeclareRobustCommand\comment[1]{}
\DeclareRobustCommand\jiaqi[1]{}
\begin{document}
\title{\vspace{-1.0em} Adaptive Task Offloading for Space Missions:\\ A State-Graph-Based Approach}

\author{Jiaqi Cao,\ 
		Shengli Zhang,~\IEEEmembership{Senior Member,~IEEE,}
		Mingzhe Wang,\ 
		Qingxia Chen,\ 
		Houtian Wang,\ 
		Naijin Liu\ 
		\vspace{-2.0em}
	
	\thanks{Naijin Liu is the corresponding author.}      
	\thanks{Jiaqi Cao and Shengli Zhang are with Shenzhen University, Shenzhen, 518052, P.R. China (e-mail: jiaqicao@szu.edu.cn; zsl@szu.edu.cn).}
	\thanks{Mingzhe Wang is with Tsinghua University (e-mail: wmzhere@gmail.com).}
	\thanks{Qingxia Chen, Houtian Wang and Naijin Liu are with Qian Xuesen Laboratory of Space Technology, China Academy of Space Technology (e-mail: chenqingxia@qxslab.cn; wanghoutian@qxslab.cn; liunaijin@qxslab.cn).}
}

\maketitle
\thispagestyle{fancy}

\begin{abstract}
	Advances in space exploration have led to an explosion of tasks. 
	Conventionally, these tasks are offloaded to ground servers for enhanced computing capability, or to adjacent low-earth-orbit satellites for reduced transmission delay. However, the overall delay is determined by \emph{both} computation and transmission costs. The existing offloading schemes, while being highly-optimized for \emph{either} costs, can be abysmal for the overall performance. The computation-transmission cost dilemma is yet to be solved.
	
	In this paper, we propose an adaptive offloading scheme to reduce the overall delay. The core idea is to jointly model and optimize the transmission-computation process over the entire network. Specifically, to represent the computation state migrations, we generalize graph nodes with multiple states. In this way, the joint optimization problem is transformed into a shortest path problem over the state graph. We further provide an extended Dijkstra's algorithm for efficient path finding. Simulation results show that the proposed scheme outperforms the ground and one-hop offloading schemes by up to 37.56\% and 39.35\% respectively on SpaceCube v2.0.
\end{abstract}

\begin{IEEEkeywords}
	LEO satellite network, task offloading, adaptive scheduling, state graph, delay optimization. 
\end{IEEEkeywords}
\IEEEpeerreviewmaketitle%
	
\section{Introduction}\label{Introduction}
\IEEEPARstart{T}{ask} offloading is an important issue for space applications such as remote sensing (RS)~\cite{plaza2007high}. 
Advances in space technology have accelerated the exploration and exploitation of space resources. Meanwhile, massive tasks with different data volumes and computational requirements make delay reduction more challenging.
%
Conventionally, these tasks are routed via spacecraft, such as low-earth-orbit (LEO) satellites, to ground servers near the destination for computation~\cite{9651919,7805169}. 
With the development of onboard computing, these tasks are also offloaded to LEO satellites within one hop for computation, the results of which are then routed to the destination~\cite{9148261}.

However, these conventional offloading schemes are susceptible to a \emph{transmission-computation cost dilemma}, because their extreme offloading targets prevent them from coping with task diversity. 
Specifically, the \emph{ground offloading scheme} reduces computation delay by the powerful computing capability of ground servers. However, for tasks with large data volumes, the transmission delay can be high, as large amounts of raw data need to be transmitted to distant destinations. 
The \emph{one-hop offloading scheme} reduces transmission delay by performing onboard computing. However, for tasks with high computational requirements, the computation delay can be high because computing resources within one-hop range are limited. 
Briefly, conventional offloading schemes are devoted to reducing transmission or computation delay, but may lead to excessive delay for the other; thus creating a transmission-computation cost dilemma.
An intuitive solution to escape this dilemma is extending the offloading targets. 
LEO satellites located between the source and destination could be a promising solution.
This is because offloading tasks to these satellites not only achieves lower transmission delay than the ground offloading scheme by performing onboard computing, but also provides more computing resources than the one-hop offloading scheme. 
Thus, LEO satellites beyond one-hop range hold the promise of avoiding both high transmission delays and computation delays existing in conventional schemes.


Novel targets bring novel challenges. While offloading tasks to LEO satellites beyond one hop may capture more computing resources to reduce the computation delay, multi-hop transmission of raw data increases the transmission delay. It is difficult to balance the transmission and computation costs to minimize the overall delay. To address this challenge, the transmission and computation processes should be optimized jointly.

In this paper, we propose a state-graph-based task offloading scheme that adaptively selects the optimal offloading target according to the characteristics of tasks. 
By integrating all potential offloading targets, and representing both computation and transmission processes in the state graph, the proposed scheme jointly optimizes computation and transmission delays. In this way, the optimal offloading target and path with the minimal overall delay of any task can be obtained. The main contributions of this paper are summarized as follows.
  
%
%

\begin{itemize}
	\item We propose a state graph as a mathematical optimization tool. It can represent migrations between different nodes and between different states of the same node. Additionally, we propose a low-complexity extended Dijkstra's algorithm that finds the shortest path in the state graph. 
	\item We enrich the offloading targets with LEO satellites beyond one-hop range to resolve the transmission-computation cost dilemma. To this end, we construct the network as a state graph with two states ( uncomputed v.s. computed) and transform the task offloading problem into a shortest path problem in the state graph.
	\item We propose an adaptive task offloading scheme which applies the extended Dijkstra's algorithm to the constructed state graph. On SpaceCube v2.0, it outperforms the ground offloading scheme and the one-hop offloading scheme by up to 37.56\% and 39.35\%, respectively.  
\end{itemize}

In rest of this paper, the system model is introduced in Section~\ref{SystemModel}. Section~\ref{StateGraph} propose a state graph and an extended Dijkstra's algorithm. 
An adaptive task offloading scheme is proposed in Section~\ref{AdaptiveScheduling}. Performance evaluations are given in Section~\ref{evaluation}. Conclusions are drawn in Section~\ref{Conclusions}.

\section{System Model}\label{SystemModel}
Due to the superiority in latency, cost, development cycle, etc., LEO satellite networks are deemed as the most prospective satellite communication system~\cite{agasid2015small}. 
Existing onboard computing systems can provide up to thousands of Giga floating-point operations per second (GFLOPS) of computing capability~\cite{MOOG}, making them a promising solution for space computation offloading.

In this paper, we investigate the space task offloading problem in the LEO-satellite-based multi-tier network shown in Fig.~\ref{NetworkModel}. Without loss of generality, we take a Walker Star constellation~\cite{walker1984satellite} with polar orbits as an example of LEO satellite networks. 
All circular orbits are evenly distributed over $ 180^{\circ} $ range, traveling north on one side of the Earth, and south on the other side. They cross each other only over the North and South poles. LEO satellites are uniformly distributed over each orbit. 
Each LEO satellite has four inter-satellite links (ISLs) with its neighbors where two are intra-plane and two are inter-plane. The ISLs in cross-seam and Polar Regions are switched off due to high dynamic motions.  
\begin{figure}[h]
	\centering
	\vspace{-0.5em}
	\setlength{\abovecaptionskip}{-0.cm}
	\setlength{\belowcaptionskip}{-0.cm}
	\includegraphics[width=0.48\textwidth]{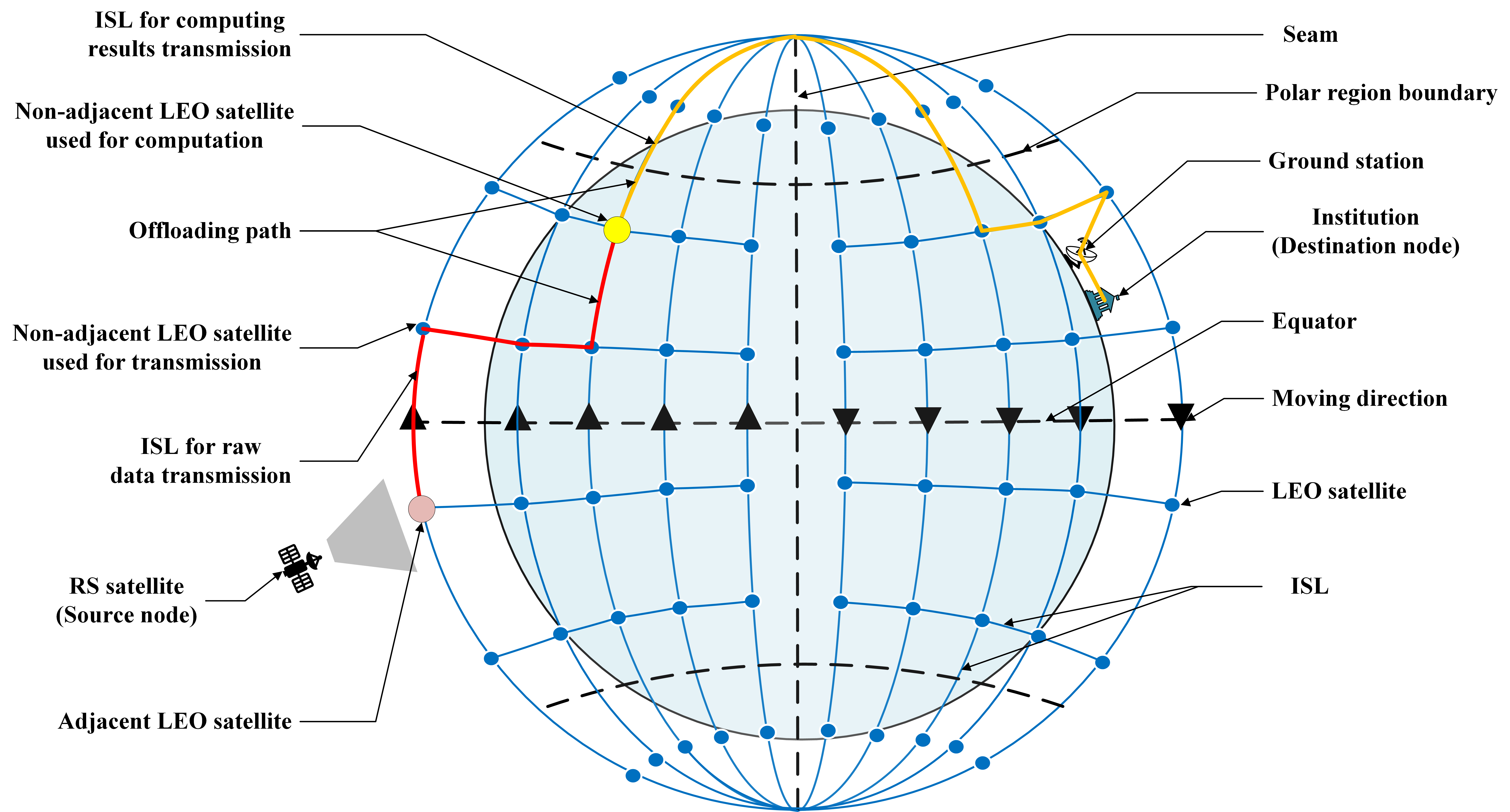}\\ 
	\begin{center}
		\caption{LEO-satellite-based space task offloading network. (Tasks are generated by spacecraft such as RS satellites. Institutions on the ground, such as the meteorological center, are the destinations of these tasks. The offloading path above is an example of offloading tasks to a LEO satellite beyond one-hop range for computation.)}\label{NetworkModel}
	\end{center}
	\vspace{-1.8em}	 
\end{figure}

The offloading path in Fig.~\ref{NetworkModel} is an example of offloading tasks to a LEO satellite beyond one-hop range. On this path, tasks are offloaded to the yellow node for computation which is a LEO satellite located beyond one-hop range from the source. Other LEO satellites on this path are used for data transmission. Accordingly, the red part of this path indicates the raw data transmission; whereas, the yellow part represents the computing results transmission. In the following sections, we dedicate to find the offloading path that achieves the lowest overall delay in the network shown in Fig.~\ref{NetworkModel}.

\section{Path Finding in State Graph}\label{StateGraph}

To solve the joint optimization problem presented in Section~\ref{Introduction}, the cost representation of the transmission and computation processes should first be determined.
%
Conventionally, the cost of the transmission process is represented as the edge weight between two nodes. 
However, for the computation process, both the uncomputed and computed states take place at the same node; thus, the computational states need to be differentiated on each node. 
Therefore, we proposed a state graph whose nodes have multiple states. The computation process can be represented as the computational state migration of nodes. 
We also proposed an extended Dijkstra's algorithm to find the shortest path in the state graph. With these mathematical tools, the transmission and computation processes can be expressed and optimized jointly. 
\vspace{-0.5em}
\subsection{State Graph}
%
\begin{definition}[State graph]
	Assuming that $ G = (V, E) $ is a graph, where $ V $ is the set of nodes and $ E $ is a function from $ V \times V $ to $ \mathbb{R} $. 
	Then, a state graph is an ordered triple $ G_{S} = (\overline{V}, \overline{E}, T ) $ comprising:\\\vspace{-0.8em}
	\begin{itemize}
		\item $ \overline{V} = \{ V_{k,s}\ |\ k \in K,\ s\in V \} $, a set of nodes related to a set of ordered states, where $ V_{k,s} $ represents the node $ s $ in state $ k $. $ K $ is the set of state.
		\item $ \overline{E} = \{ E_{k}\ |\ k\in K\}$, a set of functions related to multiple states, where $  E_{k}(s,s') ( s,s'\in V)$ represents the function from node pair (i.e., edge) $ (V_{k,s},V_{k,s'}) $ in state $ k $ to $ \mathbb{R} $ (i.e., $ E_{k}:\ V \times V \to \mathbb{R} $).
		\item $ T = \{ T_{k}\ |\ k\in K \} $, a set of functions related to multiple states, where $ T_k(s)\ (s \in V) $ represents the function from node pair $ (V_{k,s},V_{k+1,s}) $ to $ \mathbb{R} $ (i.e., $ T_{k}:\ V \to \mathbb{R} $).
	\end{itemize}
\end{definition}

Each node in $ G_{S} $ has $ |K| $ states. Thus, there are $ |K| $ functions for calculating the edge weights between nodes in the same state and $ |K|-1 $ functions for calculating the edge weights between adjacent states of the same node.

\vspace{-0.5em}
\subsection{Finding the Shortest Path}\label{ShortestPathsInStatusGraph}
Since state graphs have some differences compared with simple graphs, we first give a definition of paths in state graphs, and then propose an extended Dijkstra's algorithm which finds the shortest path in state graphs. 
\begin{definition}[Path]
	A path $ P_{S} $ in the state graph $ G_{S} = (\overline{V}, \overline{E}, T ) $ is a finite sequence of distinct edges $\large(e_1, e_2, \dots, e_l)\ \big(e_l=(V_{k_l,s_l},V_{k_{l+1},s_{l+1}}),\ l\in \mathbb{Z}^+ \big) $ for which there is a sequence of nodes $ (V_{k_1,s_1},V_{k_2,s_2},\dots,V_{k_{l+1},s_{l+1}}) $ subjecting to the following constraints:
	\begin{itemize}
		\item $ \forall \ l\in \mathbb{Z}^+ (k_{l-1} \leq k_l )$;
		\item $ \forall \ i,j\in\{1,2,\dots,l+1\} (i\neq j \land k_i=k_j \rightarrow s_i\neq s_j )$.
	\end{itemize}
\end{definition}

In the above definition, $ V_{k_1,s_1} $ and $ V_{k_{l+1},s_{l+1}} $ are the source and destination nodes of path $ P_{S}(V_{k_1,s_1},V_{k_{l+1},s_{l+1}}) $. $V_{k_l,s_l} $ and $ V_{k_{l+1},s_{l+1}} $ are the head and tail of edge $ e_{l}=(V_{k_l,s_l},V_{k_{l+1},s_{l+1}}) $. 
The first constraint indicates that for any edge in $ P_{S} $, the state index of its head is not greater than that of its tail. 
The second constraint indicates that any two nodes of $ P_{S} $ must be different from each other. In other words, two nodes of the same path can be two different nodes belonging to $ V $ in the same state or can be the same node belonging to $ V $ in different states.

Accordingly, the path length of $ P_{S}(V_{k_1,s_1},V_{k_{l+1},s_{l+1}}) $ can be calculated as $ L(P_{S})=\sum_{e_l\in P_{S}}{\omega_l} $, where $ \omega_l $ is the weight of $ e_l $ and can be calculated as:
\begin{small}
	\begin{equation}
		\setlength{\abovedisplayskip}{3pt}
		\setlength{\belowdisplayskip}{-3pt}
		{\omega_l}=
		\left\{
		\begin{array}{ll}
			E_{k_l}(s_l,s_{l+1}),\ k_{l}=k_{l+1}, \\
			T_{k_l}(s_l),\  k_{l} \neq k_{l+1}.
		\end{array}\label{Equ:1a}
		\right.
	\end{equation}	
\end{small}


\setlength{\textfloatsep}{-6pt}
\begin{algorithm}[t]\scriptsize
	\caption{The Extended Dijkstra's Algorithm}\label{ModifiedDijkstra}
	\KwData{The state graph $ G_{S} = (\overline{V}, \overline{E}, T ) $ and the state set $ K $.}
	\KwData{The source node $ u $ and the destination node $ v $.}
	
	\KwResult{The shortest path $ \pi $ from $ V_{1,u} $ to $ V_{|K|,v} $ and its length $ L_{\pi} $.}
	
	Initialize $ Q \leftarrow \emptyset $ and $ \pi \leftarrow \emptyset $;
	
	\ForEach{$ V_{k,s} \in\overline{V} $}{
		$D[k,s] \leftarrow \infty$;			
		
		$ P[k,s] \leftarrow $ undefined;
		
		$ Q \leftarrow Q\cup \{V_{k,s}\} $;
	}
	
	$ D[1,u] \leftarrow 0 $;
	
	\While{$ Q \neq \emptyset $}{
		$ \theta \leftarrow \infty$;
		
		\For{$ V_{k,s} \in Q $ }{
			\If{$ D[k,s] < \theta $}{
				$ \theta \leftarrow D[k,s] $;
				
				$ k^* \leftarrow k $;
				
				$ s^* \leftarrow s $;
			}
		}
		
		$ Q \leftarrow Q \setminus \{V_{k^*,s^*}\} $;
		
		\For{$ V_{k^*,n} \in Q $}{
			\If{$ E_{k^*}(V_{k^*,n},V_{k^*,s^*})<\infty $}{
				$ \gamma \leftarrow D[k^*,s^*] + E_{k^*}(s^*,n)$;
				
				\If{$ \gamma < D(k^*,n) $}{
					$ D[k^*,n] \leftarrow \gamma $;
					
					$ P[k^*,n] \leftarrow V_{k^*,s^*} $;
				}
			}
		}
		
		\If{$ k^* +1 \leq |K| $}{
			$ \gamma \leftarrow D[k^*, s^*] + T_{k^*}(s^*)$;
			
			\If{$ \gamma < D[k^*+1,s^*] $}{
				$ D[k^*+1,s^*] \leftarrow \gamma $;
				
				$ P[k^*+1,s^*] \leftarrow V_{k^*,s^*} $;
			}
		}		
	}
	
	$ L_{\pi} \leftarrow D[|K|,v] $;
	
	$ V_{\hat{k},\hat{s}} \leftarrow P[|K|,v] $;
	
	\While{$ V_{\hat{k},\hat{s}} \neq V_{1,u} $}{
		$ \pi \leftarrow [V_{\hat{k},\hat{s}} , \pi] $;
		
		$ V_{\hat{k},\hat{s}} \leftarrow P[\hat{k},\hat{s}] $;
	}
\end{algorithm}

As Algorithm~\ref{ModifiedDijkstra} shows, the state graph $ G_{S} = (\overline{V}, \overline{E}, T ) $, the corresponding state set $ K $, the source node $ u $, and the destination node $ v $ are the input data.
Algorithm~\ref{ModifiedDijkstra} outputs the shortest path $ \pi $ from $ V_{1,u} $ to $ V_{|K|,v} $ and its length $ L_{\pi} $.

Algorithm~\ref{ModifiedDijkstra} contains three parts. 
The first part is the initialization (line 1 to line 7). $ Q $ is the unvisited node set. $ D $ is the set recording the tentative distance value (i.e., the length of the shortest path discovered so far) between $ V_{1,u} $ and each node in $ \overline{V} $. The tentative distance value is set to zero for our initial node and to infinity for all other nodes. In addition, $ P $ is the set of parent nodes of each node in $ \overline{V} $.

The second part is the shortest path searching (line 8 to line 34). First, we find the node $ V_{k^*,s^*} $ whose tentative distance value is the minimum among all unvisited nodes (Line 9 to line 16). Then, delete the above selected node from the unvisited nodes set $ Q $ (line 17). Next, we consider all the unvisited neighbors in state $ k^* $ of $ V_{k^*,s^*} $ and calculate their tentative distances through the selected node (line 18 to line 26). By comparing the newly calculated tentative distance to the one currently assigned to the neighbor, the smaller one is assigned to it. If the tentative distance of a neighbor is updated, its parent node needs to be changed to $ V_{k^*,s^*} $ accordingly.
Similarly, in line 27 to line 33, we consider the corresponding node of $ V_{k^*,s^*} $ in state $ k^*+1 $ and calculate the tentative distances of $ V_{k^*+1,s^*} $ through $ V_{k^*,s^*} $. The smaller of the newly calculated and the currently assigned tentative distance is assigned to $ D[k^*+1,s^*] $. If the tentative distance is updated, its parent node of $ V_{k^*+1,s^*} $ needs to be changed to $ V_{k^*,s^*} $ accordingly.

The third part is the output generation part (line 35 to line 40). Line 35 outputs the length of the shortest path between $ V_{1,u} $ and $ V_{|K|,v} $. From line 36 to line 40, by searching the parent node set iteratively, the shortest path between $ V_{1,u} $ and $ V_{|K|,v} $ is output.
\vspace{-0.5em}
\subsection{Algorithm Correctness}
In the following, we prove that the extended Dijkstra's algorithm is correct by induction. Let $ D[k,s] $ be the distance label found by the algorithm and let $ \delta[k,s] $ be the shortest path distance from $ V_{1,u} $ to $ V_{|K|,v} $. We want to show that $ D[k,s] = \delta[k,s] $ for every node $ s $ in any state $ k $ at the end of the algorithm, showing that the algorithm correctly computes the distances. We prove this by induction on $ |R|\ (R= \overline{V} \setminus Q) $ via the following lemma.

\begin{lemma}
	For each $ V_{k,s} \in R $, $ D[k,s] = \delta[k,s] $.
\end{lemma}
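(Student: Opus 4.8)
The plan is to mirror the classical correctness proof of Dijkstra's algorithm, adapting the induction to the two kinds of relaxations (intra-state edges via $E_{k}$ and inter-state transitions via $T_{k}$) and to the monotonicity constraint $k_{l-1}\le k_l$ on paths. I would proceed by induction on $|R|$, where $R=\overline{V}\setminus Q$ is the set of ``settled'' nodes, maintaining the invariant stated in the lemma: $D[k,s]=\delta[k,s]$ for every $V_{k,s}\in R$, together with the auxiliary invariant that for every $V_{k,s}\in Q$, $D[k,s]$ equals the length of a shortest path from $V_{1,u}$ to $V_{k,s}$ all of whose intermediate nodes lie in $R$.

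For the base case $|R|=0$, the claim is vacuous; the first node extracted is $V_{1,u}$ with $D[1,u]=0=\delta[1,u]$ (note that by the path definition and the edge-weight convention, and assuming nonnegative weights from $E_k$ and $T_k$, no shorter path exists). For the inductive step, suppose the invariant holds for the current $R$, and let $V_{k^*,s^*}$ be the node extracted next, i.e. the minimizer of $D[\cdot,\cdot]$ over $Q$. I would show $D[k^*,s^*]=\delta[k^*,s^*]$ by contradiction: take a shortest path $\pi^*$ from $V_{1,u}$ to $V_{k^*,s^*}$, and let $V_{a,b}$ be the first node on $\pi^*$ that lies in $Q$ (it exists since the endpoint is in $Q$; its predecessor $V_{c,d}$ lies in $R$, possibly $V_{c,d}=V_{1,u}$). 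Because the predecessor was settled, the relaxation performed when $V_{c,d}$ was extracted — either the $E_c$-relaxation if $a=c$, or the $T_c$-relaxation if $a=c+1$ (these are the only two cases allowed by the path constraint $k_{l-1}\le k_l$, and the algorithm performs exactly these two relaxation types) — guarantees $D[a,b]\le\delta[a,b]$. By nonnegativity of all weights, $\delta[a,b]\le\delta[k^*,s^*]\le D[k^*,s^*]$, and since $V_{a,b}\in Q$ we also have $D[k^*,s^*]\le D[a,b]$ by choice of the extracted node; chaining these gives $D[k^*,s^*]=\delta[k^*,s^*]$. Finally I would check that adding $V_{k^*,s^*}$ to $R$ preserves the auxiliary invariant for the nodes remaining in $Q$: the two relaxation blocks (lines 18--26 and lines 27--33) update exactly the intra-state neighbors $V_{k^*,n}$ and the single inter-state successor $V_{k^*+1,s^*}$, which are precisely the nodes whose shortest ``through-$R$'' path can newly use $V_{k^*,s^*}$ as its last intermediate node.

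The main obstacle, and the place where this proof genuinely departs from textbook Dijkstra, is handling the state-monotonicity constraint correctly in the contradiction step: I must argue that the first $Q$-node $V_{a,b}$ on $\pi^*$ is reached from its $R$-predecessor by one of exactly the two relaxation operations the algorithm implements, and that no third type of transition is possible along a valid path. This follows from the \emph{Path} definition — consecutive nodes satisfy $k_{l}\le k_{l+1}$, and by the edge-weight rule \eqref{Equ:1a} a single edge either keeps the state fixed (weight $E_{k_l}(s_l,s_{l+1})$) or advances it, and inspecting the algorithm shows $T$ only ever connects $V_{k,s}$ to $V_{k+1,s}$, so a state-advancing edge must go from $k$ to $k+1$ on the same underlying node. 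A secondary point worth stating explicitly is the nonnegativity of the weights produced by $E_k$ and $T_k$ (inherited from the physical meaning of transmission and computation delays), since the whole argument, like Dijkstra's, collapses without it.
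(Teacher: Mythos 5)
Your proof follows essentially the same route as the paper's: induction on the settled set $R$, a ``first node/edge leaving the settled set'' argument applied to a hypothetical shorter path, and the minimality of the extracted distance label, with the intra-state ($E_k$) and inter-state ($T_k$) relaxations distinguished exactly as in the paper. If anything, your write-up is slightly more careful, since you make explicit the nonnegativity of the edge weights and the auxiliary ``shortest path through $R$'' invariant, both of which the paper's proof uses only implicitly.
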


\begin{proof}
	\textit{Base case ($ |R| = 1 $):} Since $ R $ only grows in size, the only time $ |R| = 1 $ is when $ R = \{V_{1,u}\} $ and $ D[1,u] = 0 = \delta[1,u] $, which is correct.
	
	\textit{Inductive hypothesis:} Let $ V_{\epsilon,w} $ be the next node added to $ R $. Let $ R' = R \cup \{V_{\epsilon,w}\} $. Our I.H. is: for each $ V_{k,s} \in R' $, $ D[k,s] = \delta[k,s] $.
	
	\textit{Using the I.H.:} By the inductive hypothesis, for every node in $ R' $ that isn't $ V_{\epsilon,w} $, the distance label is correct. We need only show that $ D[\epsilon,w] = \delta[\epsilon,w] $ to complete the proof.
	
	Suppose a contradiction that the shortest path from $ V_{1,u} $ to $ V_{\epsilon,w} $ is $ O $ and has length $ L(O) < D[\epsilon,w] $. $ O $ starts in $ R' $ and at some leaves $ R' $ (to get to $ V_{\epsilon,w} $ which is not in $ R' $). Let $ e_{x,y}=(V_{k_x,s_x},V_{k_y,s_y})\ (k_y-k_x \in \{0,1\})$ be the first edge along $ O $ that leaves $ R' $. Let $ O_x $ be the sub-path between $ V_{1,u} $ and $ V_{k_x,s_x} $ of $ O $. Clearly: $L(O_x) + L(e_{x,y}) \leq L(O) $. (If $ k_x = k_y$, $ L(e_{x,y}) = E_{k_x}(s_x,s_y)$; otherwise, $ L(e_{x,y}) = T_{k_x}(s_x) $.)
	Since $ D[k_x,s_x] $ is the length of the shortest path from $ V_{1,u} $ to $ V_{k_x,s_x} $ by the I.H., $ D[k_x,s_x] \leq L(O_x) $, giving us $ D[k_x,s_x] + L(e_{x,y}) \leq L(O_x) + L(e_{x,y}) $. Since $ V_{k_y,s_y} $ is adjacent to $ V_{k_x,s_x} $, $ D[k_y,s_y] $ must have been updated by the algorithm, so $ D[k_y,s_y] \leq D[k_x,s_x] + L(e_{x,y}) $. Finally, since $ V_{\epsilon,w} $ was picked to be the next node added to $ R $ by the algorithm, $ V_{\epsilon,w} $ must have the smallest distance label: $ D[\epsilon,w] \leq D[k_y,s_y] $. Combining these inequalities in reverse order, a contradiction $ D[k_x,s_x] < D[k_x,s_x] $ is formed. Thus, no such shorter path $ O $ must exist. So $ D[k,s] = \delta[k,s] $. 	$\hfill\blacksquare$
\end{proof}

This lemma shows the algorithm is correct by applying the lemma for $ R = \overline{V} $.
\vspace{-0.5em}
\subsection{Algorithm Complexity}
The computational complexity of the proposed extended Dijkstra's algorithm on state graph $ G_{S} = (\overline{V}, \overline{E}, T) $ mainly resides in lines 8--34. 
Line 8 indicates that lines 9--33 will be executed $ |Q| $ times. 
Since there are $ \Theta(|\overline{V}|) $ nodes in set $ Q $, the complexity of executing lines 10--16 $ |Q| $ times is $ \Theta(|\overline{V}|^2) $. 
Lines 18--19 restrict $ E_{k^*}(V_{k^*,n},V_{k^*,s^*}) $ to the edges connecting $ V_{k^*,s^*} $ and its adjacent nodes in the same state; thus, the complexity of executing lines 18--26 $ |Q| $ times is $ \Theta(|K|\cdot|E|) $. 
Since there's no loop in lines 27--33, the complexity of executing lines 27--33 $ |Q| $ times is $ \Theta(|\overline{V}|) $. 
According to the definition of $ G_{S} $, $ |\overline{V}| = |V| \cdot |K| $ and $ |E| = |V|^2 $. Therefore, the computation complexity of the proposed algorithm is $ \Theta(|K|^2\cdot|V|^2) $. 

\section{Adaptive Offloading With State Graph}\label{AdaptiveScheduling}
In this section, we utilize the mathematical tools proposed in Section~\ref{StateGraph} to solve the space task offloading problem.
As the optimal offloading targets can be different for tasks with diverse transmission and computational requirements, we propose an adaptive offloading scheme based on state graph. 

\begin{definition}[Task]
	Any task $ \tau $ has five attributes, whose set is $ \Lambda_ {\tau} = (u,v,t,\widetilde{C},\widetilde{N},\widetilde{N'}) $. $ u $ and $ v $ are the source and destination, respectively. $ t $ is the instance when $ \tau $ is generated. $ \widetilde{C} $ is the computational requirement of $ \tau $. $ \widetilde{N} $ and $ \widetilde{N'} $ are data volumes of $ \tau $ before and after computation, respectively. Tasks are the smallest unit of transmission and computation.
\end{definition}

As stated in Section~\ref{Introduction}, computational states need to be distinguished to select the optimal offloading target. Therefore, we construct the network shown in Fig.~\ref{NetworkModel} as a state graph with two states. By defining the two states as the \emph{uncomputed} state and \emph{computed} state, the task computation process can be expressed as a migration between the above two states. 
Next, we introduce the state graph $ G_{S} = (\overline{V}, \overline{E}, T) $ constructed based on the network shown in Fig~\ref{NetworkModel} in detail.

$ \overline{V} = \{ V_{k,s}|k \in \{1,2\},s\in V\cup u \cup v \} $. $ 1 $ and $ 2 $ represent the uncomputed and computed states, respectively. $ V $ is the set of LEO satellites, $ u,v $ are the source and destination\footnote{Since the data is usually transmitted between the ground station, the ground server and the institution over fiber, the transmission delay among them is negligible. In this condition, the ground station, the ground server, and the institution are simplified as the destination node in the state graph.} of task $ \tau $, respectively. 
	
$ \overline{E} = \{ E_{k}\ |\ k\in K\}$. $  E_{k}(s,s') ( s,s'\in V\cup u \cup v)$ satisfies
	\begin{small}
			\setlength{\abovedisplayskip}{3pt}
			\setlength{\belowdisplayskip}{3pt}
			\begin{equation}
			\left\{
			\begin{array}{ll}
				\widetilde{N}=\int_{t_s}^{t_s+E_k} R(s,s',t) \mathrm{d}t,\ \mathit{k}=1, \\
				\widetilde{N'}=\int_{t_s}^{t_s+E_k} R(s,s',t) \mathrm{d}t,\ \mathit{k}=2,
			\end{array}\label{Equ:2}
			\right.
		\end{equation}
	\end{small}%
	where $ t_s $ is the instant that task $ \tau $ arrives at node $ s $. 
	$ R(s,s',t) $ is the available data transmission rate of edge $ e_{s,s'} $ at instant $ t $, which is independent from states. Depending on the node types of $ s $ and $ s' $, $ e(s,s') $ can be an ISL or satellite-to-ground link (SGL). 
	If $ s $ and $ s' $ are not communicable\footnote{Two nodes are communicable or not is determined by the distance between them and the communication range of the signal transceiver.} (i.e., $ s \nLeftrightarrow s' $), $ R(s,s',t) =0 $; otherwise, $ R(s,s',t) \in [0,R_{max}] $, where $ R_{max} $ is the maximum data transmission rate of $ e_{s,s'} $.
	
$ T = \{ T_{k}\ |\ k\in K \} $. $ T_k(s) $ satisfies
	\begin{small}
		\setlength{\abovedisplayskip}{3pt}
		\setlength{\belowdisplayskip}{3pt}	
		\begin{equation}
			\widetilde{C}=\int_{t_s}^{t_s+T_{k}}C(s,t)\mathrm{d}t,
		\end{equation}
	\end{small}%
	where $ C(s,t) $ is the available computing capability of node $ s $ at instant $ t $. Since the source node does not perform computation, when $ s=u $, $ C(s,t) = 0 $. Additionally, ground servers (such as supercomputers) usually have large amounts of computing resources so that they can complete the computation in a very short time; therefore, when $ s=v $, $ C(s,t) = \infty $. In other cases, $ C(s,t) \in [0,C_{s}^{max}] $, where $ C_{max} $ is the maximum computing capability of each satellite. 
	
In other words, $ E_{k}(s,s') $ represents the edge weight between different nodes in the same state, whose value equals the delay of transmitting task $ \tau $ from node $ s $ to $ s' $; whereas, $ T_k(s) $ represents the weight of edge from a node in the uncomputed state to itself in the computed state, whose value equals the delay of computing task $ \tau $ on node $ s $. In this way, both the transmission and computation processes can be represented in the state graph. It should be noted that, we fit the continuous integration to our discrete model with discretization, where minor error may be introduced.

In the state graph above, the task offloading problem is transformed into a shortest path problem. 
By performing the extended Dijkstra's algorithm on the constructed stated graph, the transmission and computation delays are jointly optimized. In this way, the shortest path, which contains the optimal offloading target, is obtained. Since the constructed state graph contains both conventional and extended offloading targets, the optimal offloading decision can be made adaptively according to the characteristics of tasks.


\section{Performance Evaluation}\label{evaluation}
In this section, we evaluate the performance of the proposed \emph{adaptive offloading scheme} and compare it with the \emph{ground} and \emph{one-hop} offloading scheme. 

In this section, the Walker constellation has 8 orbits, each with 16 satellites, located at an altitude of 500 km. Based on the state-of-the-art techniques, $ R_{max}^{ISL} $ and $ R_{max}^{SGL} $ are set to 5 Gbps~\cite{del2019technical} and 1 Gbps~\cite{yost2021state}, respectively. 
%
Unless otherwise stated, the onboard computing platform is SpaceCube v2.0~\cite{SpaceCube2.0} whose maximum computing capability is 200 GFLOPS. 
The task arrival of each area in space is a Poisson stochastic process whose arrival rate is estimated based on the network connection data from Internet Census~\cite{InternetConsensus2012}. The sum of arrival rates in space is set to 1000. The destination locations of tasks are randomly generated. $ \widetilde{N'} $ is set to 16 bits.

\begin{figure}[h]
	\centering
	\vspace{-0.5em}
	\setlength{\abovecaptionskip}{-0.cm}
	\setlength{\belowcaptionskip}{-0.cm}
	\includegraphics[width=0.50\textwidth]{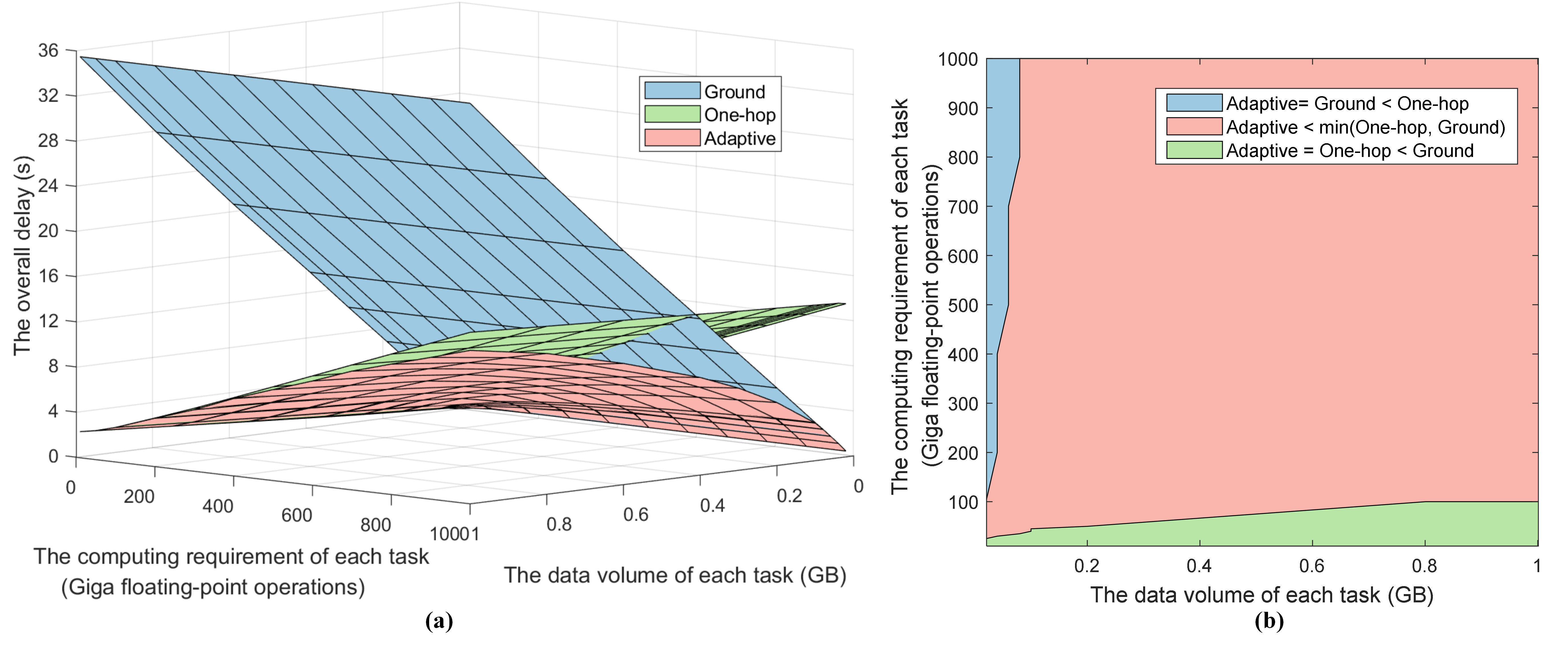}\\ 
	\begin{center}
		\caption{(a) The overall delay versus the data volume $ \widetilde{N} $ and computational requirement $ \widetilde{C} $ of tasks. (b) The offloading scheme which achieves the lowest overall delay.}\label{SimFig1}
	\end{center}
	\vspace{-1.5em}	 
\end{figure}
Fig.~\ref{SimFig1} (a) shows the overall delay versus the data volume $ \widetilde{N} $ and computational requirement $ \widetilde{C} $ of tasks. 
Since the red surface lies below the blue and green surfaces, it can be concluded that \textbf{the overall delay of the proposed scheme is always less than or equal to the benchmark schemes}. 
Especially, when $ \widetilde{N} = 0.4 $ GB and $ \widetilde{C}=1000 $ Giga floating-point operations (GFLO), the proposed adaptive offloading scheme decreases the overall delay by 37.56\% and 39.35\% compared with the ground offloading scheme and the one-hop offloading scheme, respectively.

\textbf{From the dimension of data volume, the proposed scheme degrades to the ground offloading scheme when $ \widetilde{N} $ is small and to the one-hop offloading scheme when $ \widetilde{N} $ is large.}
When $ \widetilde{N} $ is small, the advantage of computing with powerful ground servers outweighs the disadvantage of transmitting raw data across the network. When $ \widetilde{N} $ is large, the cost of transmitting raw data increases dramatically, the advantage of reducing transmission delay by offloading tasks to satellites within one-hop range outweighs the disadvantage that these satellites can only provide limited computing capability.
\textbf{From the dimension of computational requirement, the proposed scheme degrades to the one-hop offloading scheme when $ \widetilde{C} $ is small.} In this situation, satellites within one-hop range are already able to provide sufficient computing resources to ensure that the saved transmission delay outweighs the increased computation delay.
Except for these extreme cases, the proposed scheme outperforms the benchmark schemes, whose reasons will be introduced in the analyses of Fig.~\ref{SimFig2}.

Fig.~\ref{SimFig1} (b) shows the offloading scheme which achieves the lowest overall delay for different tasks. It can be concluded that \textbf{the proposed adaptive offloading scheme outperforms the benchmark schemes over a wide range of parameters} (i.e., the red area). In the blue and green area, the proposed degrades to the ground offloading scheme and the one-hop offloading scheme, respectively. 

\begin{figure}[h]
	\centering
	\vspace{-0.5em}
	\setlength{\abovecaptionskip}{-0.cm}
	\setlength{\belowcaptionskip}{-0.cm}
	\includegraphics[width=0.48\textwidth]{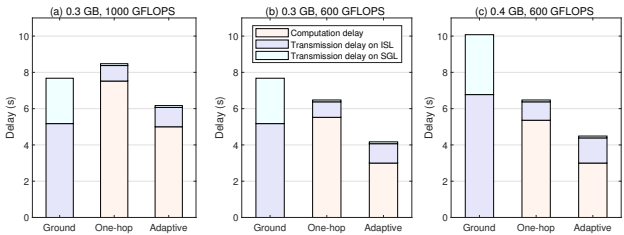}\\ 
	\begin{center}
		\caption{Breakdown of the overall delay.}\label{SimFig2}
	\end{center}
	\vspace{-1.5em}	 
\end{figure}



Fig.~\ref{SimFig2} shows the breakdown of the overall delay of the proposed and benchmark offloading schemes. 
Here the transmission/computation delay includes the waiting delay between the arrival instant of the corresponding node and the start of transmission/computation, respectively. 

\textbf{The proposed scheme outperforms the ground offloading scheme because the data volume is greatly reduced after the onboard computation}. As shown in Fig.~\ref{SimFig2}, the transmission delays on ISL and SGL of the proposed scheme is much lower than the ground offloading scheme. In the meanwhile, the increase in computation delay is less than the decrease in transmission delay. 
\textbf{The proposed scheme outperforms the one-hop offloading scheme because the proposed scheme can utilize computing resources in a larger area to relieve local overloads}. As shown in Fig.~\ref{SimFig2}, the computation delay of the adaptive offloading scheme is much lower than that of the one-hop offloading scheme. In the meanwhile, the increase in transmission delay on ISL is less than the decrease in computation delay. 

\begin{table}[h]
	\centering
	\vspace{0.5em}
	\begin{spacing}{1}
		\captionsetup{font={footnotesize},justification=centering}
		\caption{Performance Improvements on Existing Highly Integrated Onboard Computing Platforms. ($ \widetilde{N} =  0.3$ GB, $ \widetilde{C} = 1000 $ GFLO)}\label{Table2}
	\end{spacing}
	\small
	\resizebox{0.45\textwidth}{!}{

\begin{tabular}{|c|c|c|c|} 
	\hline
	\multirow{3}{*}{Onboard Computing Platform} & \multirow{3}{*}{\begin{tabular}[c]{@{}c@{}}Computing\\Capability\\(GFLOPS)\\\end{tabular}} &  \multicolumn{2}{c|}{\multirow{2}{*}{Improvements}}\\ 
& & \multicolumn{2}{c|}{} \\
	\cline{3-4}
	&                                                                                          & Ground & One-hop                   \\ 
	\hline
	Unibap iX5-100                              & 127                                                                                      & +21\%  & +159\%                    \\ 
	\hline
	SpaceCube v2.0                              & 200                                                                                      & +42\%  & +93\%                     \\ 
	\hline
	SpaceCube v3.0                              & 590                                                                                      & +139\% & +15\%                     \\ 
	\hline
	MOOG V-Series Ryzen                         & 1000                                                                                     & +256\% & +6\%                      \\ 
	\hline
\end{tabular}}
	\vspace{1em}
\end{table}

Table~\ref{Table2} evaluates the performance of the adaptive offloading scheme over the benchmark schemes on several existing onboard computing platforms. (+0\% means a tie, whereas +100\% means saving half the time.)
\textbf{As the computing capability increases, the performance improvement over the ground offloading scheme increases and the improvement over the one-hop offloading scheme decreases.}
This is because the computation delay of LEO satellites decreases when their computing capability increases, making the disadvantage of limited onboard computing resources less severe. In the meanwhile, the increase in computing capability leads to a reduction in local overloads; thus, fewer tasks need to be offloaded to satellites beyond one-hop range for computation.

	

\section{Conclusion}\label{Conclusions}
We have proposed an adaptive offloading scheme based on state graphs for space missions. The proposed scheme outperforms the benchmark schemes for being able to escape the transmission-computation cost dilemma by extending offloading targets with LEO satellites beyond one-hop range and implementing joint optimization of transmission and computation. Its low computational complexity offers the possibility of application in real networks.
\bibliographystyle{IEEEtran}
\bibliography{IEEEfull,Reference}

\begin{thebibliography}{10}
\providecommand{\url}[1]{#1}
\csname url@samestyle\endcsname
\providecommand{\newblock}{\relax}
\providecommand{\bibinfo}[2]{#2}
\providecommand{\BIBentrySTDinterwordspacing}{\spaceskip=0pt\relax}
\providecommand{\BIBentryALTinterwordstretchfactor}{4}
\providecommand{\BIBentryALTinterwordspacing}{\spaceskip=\fontdimen2\font plus
\BIBentryALTinterwordstretchfactor\fontdimen3\font minus
  \fontdimen4\font\relax}
\providecommand{\BIBforeignlanguage}[2]{{%
\expandafter\ifx\csname l@#1\endcsname\relax
\typeout{** WARNING: IEEEtran.bst: No hyphenation pattern has been}%
\typeout{** loaded for the language `#1'. Using the pattern for}%
\typeout{** the default language instead.}%
\else
\language=\csname l@#1\endcsname
\fi
#2}}
\providecommand{\BIBdecl}{\relax}
\BIBdecl

\bibitem{plaza2007high}
A.~J. Plaza and C.-I. Chang, \emph{High performance computing in remote
  sensing}.\hskip 1em plus 0.5em minus 0.4em\relax CRC Press, 2007.

\bibitem{9651919}
Z.~Lai, Q.~Wu, H.~Li, M.~Lv, and J.~Wu, ``Orbitcast: Exploiting
  mega-constellations for low-latency earth observation,'' in \emph{Proc. IEEE
  ICNP}, 2021, pp. 1--12.

\bibitem{7805169}
X.~Jia, T.~Lv, F.~He, and H.~Huang, ``Collaborative data downloading by using
  inter-satellite links in leo satellite networks,'' \emph{IEEE Transactions on
  Wireless Communications}, vol.~16, no.~3, pp. 1523--1532, 2017.

\bibitem{9148261}
Y.~Jin, H.~Yao, and T.~Mai, ``Double auction game-based computing resource
  allocation in leo satellite system,'' in \emph{2020 International Wireless
  Communications and Mobile Computing (IWCMC)}, 2020, pp. 274--279.

\bibitem{agasid2015small}
M.~D.~D. Staff, ``Small spacecraft technology state of the art,'' \emph{NASA,
  Ames Research Center, Mission Design Division Rept.
  NASA/TP-2015-216648/REV1}, Dec. 2015.

\bibitem{MOOG}
``Integrated avionics unit - datasheet,''
  \url{https://www.moog.com/content/dam/moog/literature/Space_Defense/spaceliterature/avionics/moog-integrated-avionics-unit-datasheet.pdf},
  2020.

\bibitem{walker1984satellite}
J.~G. Walker, ``Satellite constellations,'' \emph{Journal of the British
  Interplanetary Society}, vol.~37, p. 559, 1984.

\bibitem{del2019technical}
I.~Del~Portillo, B.~G. Cameron, and E.~F. Crawley, ``A technical comparison of
  three low earth orbit satellite constellation systems to provide global
  broadband,'' \emph{Acta Astronautica}, vol. 159, pp. 123--135, 2019.

\bibitem{yost2021state}
B.~Yost, S.~Weston, G.~Benavides, F.~Krage, J.~Hines, S.~Mauro, S.~Etchey,
  K.~O’Neill, and B.~Braun, ``State-of-the-art small spacecraft technology,''
  2021.

\bibitem{SpaceCube2.0}
``Spacecube v2.0 hybrid data processing system,'' \url{
  https://spacecube.nasa.gov/SpaceCube_v2_Brief PDF}.

\bibitem{InternetConsensus2012}
\BIBentryALTinterwordspacing
C.~Botnet, \emph{Internet Census 2012: Port scanning /0 using insecure embedded
  devices}, 2012. [Online]. Available:
  \url{http://census2012.sourceforge.net/paper.html}
\BIBentrySTDinterwordspacing

\end{thebibliography}

\end{document}